\newtheorem{lemma}{Lemma}[section]
\newtheorem{remark}{Remark}[section]
\newtheorem{example}{Example}[section]
\def\Bka{{\it Biometrika}}
\def\AIC{AIC}
\def\BIC{BIC}
\def\GCV{GCV}
\def\CV{CV}
\def\DF{\mathrm{df}}
\def\MSE{MSE}
\def\SE{SE}
\def\SD{SD}
\def\ZZ{ZZ}
\def\NN{NN}
\def\T{{ \mathrm{\scriptscriptstyle T} }}
\def\cov{\mathrm{cov}}
\begin{document}
{
\begin{center}
\textbf{\Large Efficient algorithm to select tuning parameters in sparse regression modeling with regularization}
\end{center}
\begin{center}
\large {Kei Hirose$^1$, Shohei Tateishi$^2$ 
and Sadanori Konishi$^3$}
\end{center}

\begin{center}
{\it {\small
$^1$ Division of Mathematical Science, Graduate School of Engineering Science, Osaka University,\\
1-3, Machikaneyama-cho, Toyonaka, Osaka, 560-8531, Japan \\

$^2$ Toyama Chemical Co., Ltd., 3-2-5, Nishi-Shinjuku, Shinjuku-ku, Tokyo, 160-0023, Japan.\\

\vspace{1.2mm}

$^3$ Faculty of Science and Engineering, Chuo University,\\
 1-13-27 Kasuga, Bunkyo-ku, Tokyo, 112-8551, Japan. \\
}}

\vspace{1.5mm}

{\it {\small E-mail: mail@keihirose.com, \ shohei.tateishi@gmail.com,  
konishi@math.chuo-u.ac.jp.\\
}}

\end{center}
\begin{abstract}
In sparse regression modeling via regularization such as the lasso, it is important to select appropriate values of tuning parameters including regularization parameters.  The choice of tuning parameters can be viewed as a model selection and evaluation problem.  Mallows' $C_p$ type criteria may be used as a tuning parameter selection tool in lasso-type regularization methods, for which the concept of degrees of freedom plays a key role.  In the present paper, we propose an efficient algorithm that computes the degrees of freedom by extending the generalized path seeking algorithm.  Our procedure allows us to construct model selection criteria for evaluating models estimated by regularization with a wide variety of convex and non-convex penalties.  Monte Carlo simulations demonstrate that our methodology performs well in various situations.  A real data example is also given to illustrate our procedure.
 \end{abstract}

\noindent {\bf Key Words}:   $C_p$, Degrees of freedom, Generalized path seeking, Model selection,  Regularization, Sparse regression, Variable selection

\section{Introduction}
Variable selection is fundamentally important in high-dimensional linear regression modeling.  
Traditional variable selection procedures follow the best subset selection along with  model selection criteria such as Akaike's information criterion \citep{Akaike:1973} and the Bayesian information criterion \citep{Schwarz:1978}.  However, the best subset selection  is  often unstable because of its inherent discreteness \citep{Breiman:1996}, and then the resulting model has poor prediction accuracy. To overcome this drawback of the subset selection,   \citet{Tibshirani:1996} proposed the lasso, which shrinks some coefficients toward exactly zero by imposing an $L_1$ penalty on regression coefficients, resulting in simultaneous model selection and estimation procedure. 

Over the past 15 years, there has been a considerable amount of lasso-type penalization methods in literature: bridge regression \citep{FrankFriedman:1993,Fu:1998}, smoothly clipped absolute deviation \citep{FanLi:2001}, elastic net \citep{ZouandHastie:2005},  group lasso \citep{YuanandLin:2006},  adaptive lasso \citep{Zou:2006},  composite absolute penalties family \citep{Zhaoetal:2009}, minimax concave penalty  \citep{Zhang:2010} and generalized elastic net \citep{Friedman2008} along with many other regularization techniques.  
It is well known that the solutions are not usually expressed in a closed form, since the penalty term includes non-differentiable function.   A number of  researchers have presented efficient algorithms to obtain the entire solutions 
(e.g., least angle regression, \citealp{Efronetal:2004}; 
 coordinate descent algorithm, \citealp{Friedmanetal:2007,Friedmanetal:2010}, \citealp{Mazumderetal:2009}; generalized path seeking, \citealp{Friedman2008}). 

A crucial issue in the sparse regression modeling via regularization is the selection of adjusted tuning parameters including regularization parameters, because the regularization parameters identify a set of non-zero coefficients and then assign a set of variables to be included in a model.   
Choosing the tuning parameters can be viewed as a model selection and evaluation problem.  
Mallows' $C_p$ type criteria \citep{Mallows:1973} estimate the prediction error of the fitted model, and  give better accuracy than cross validation in some situations \citep{Efron:2004}.  The concept of degrees of freedom (e.g., \citealp{Ye:1998,Efron:1986,Efron:2004}) plays a key role in the theory of $C_p$ type criteria.  

In a practical situation, however, it is difficult to directly derive an analytical expression of  (unbiased estimator of) degrees of freedom for sparse regression modeling.  A few researchers have derived the analytical results by using the Stein's unbiased risk estimator \citep{Stein:1981}  for only specific penalties.  \citet{Zouetal:2007} showed that the number of non-zero coefficients is an unbiased estimate of the degrees of freedom of the lasso.  \citet{Kato:2009} derived an unbiased estimate of the degrees freedom of the lasso, group lasso and fused lasso based on a differential geometric approach.  \citet{Mazumderetal:2009} proposed a re-parametrization of minimax concave penalty, which enables us to calibrate the degrees of freedom of minimax concave family.  However, these selection procedures do not cover more general regularization methods via convex and non-convex penalties.  
In such a situation, the cross validation and the bootstrap (e.g., \citealp{Ye:1998,Efron:2004,ShenYe:2002,Shenetal:2004}) may be useful to estimate the degrees of freedom. These approaches, however, can be computationally expensive, and often yield unstable estimates.   

In the present paper, we propose a new algorithm that can iteratively calculate the degrees of freedom by extending the generalized path seeking algorithm \citep{Friedman2008}.  
The proposed procedure can be applied to a wide variety of convex and non-convex penalties including the generalized elastic net family \citep{Friedman2008}.   
Furthermore, our algorithm is computationally-efficient, because there is no need to perform  numerical optimization to obtain the solutions and degrees of freedom at each step.  
The proposed methodology is investigated through the analysis of real data and Monte Carlo simulations.   Numerical results show that $C_p$ criterion based on our algorithm  performs well in various situations.    




The remainder of this paper is organized as follows: Section 2 briefly describes the degrees of freedom in linear regression models. In Section 3, we introduce a new algorithm that iteratively computes the degrees of freedom by extending the generalized path seeking.  
Section 4 presents numerical results for both artificial and real datasets. Some concluding remarks are given in Section 5.

\section{Degrees of freedom in linear regression models}
In linear regression models, the degrees of freedom can be used as a model complexity measure in Mallows' $C_p$ type criteria.  Suppose that  $ \bm{x}_j=(x_{1j},\dots,x_{Nj})^T$ ($j=1,\dots,p$) are predictors and $ \bm{y} = (y_1,\dots,y_N)^T$ is a response vector.  Without loss of generality, it  is assumed that the response is centered and the predictors are standardized by changing a location and employing scale transformations
\begin{equation*}
\sum_{i=1}^N y_i =0,\quad \sum_{i=1}^N x_{ij} =0, \quad  \sum_{i=1}^N x_{ij}^2 = 1 \quad  (j=1,\dots,p).
\end{equation*}
Consider the linear regression model
\begin{eqnarray*}
 \bm{y} = X \bm{\beta} +  \bm{\varepsilon},
\end{eqnarray*}
where  $X=( \bm{x}_1,\dots, \bm{x}_p)$ is an $N \times p$ predictor matrix, $ \bm{\beta} = (\beta_1,\dots,\beta_p)^T$ is a coefficient vector and $ \bm{\varepsilon} =(\varepsilon_1,\dots,\varepsilon_N)^T$ is an error vector with  $E[ \bm{\varepsilon}] =  {0}$ and $V[ \bm{\varepsilon}] = \sigma^2I$.  Here  $I$ is an identity matrix. 

The linear regression model is estimated by the penalized least square method
\begin{equation}
\hat{ \bm{\beta}}(t) =  \underset{\bm{\beta}}{\operatorname{argmin}}  \ R( \bm{\beta}) \quad {\rm s.t.} \quad   P( \bm{\beta}) \le t,  \label{settei}
\end{equation}
where $R( \bm{\beta})$ is a squared error loss function
\begin{equation}
R( \bm{\beta}) = ( \bm{y} - X  \bm{\beta})^T( \bm{y} - X  \bm{\beta}) / N, \label{squaredloss}
\end{equation}
$P( \bm{\beta})$ is a penalty term which yields sparse solutions (e.g., the lasso penalty is $P( \bm{\beta}) = \sum_j  | \beta_j |$), and $t$ is a tuning parameter.  An equivalent formulation of (\ref{settei}) is
\begin{equation*}
\hat{ \bm{\beta}}(\lambda) = \underset{\bm{\beta}}{\operatorname{argmin}}\{ R( \bm{\beta}) + \lambda P( \bm{\beta}) \},  \label{settei2}
\end{equation*}
where  $\lambda$ is a regularization parameter, which corresponds to $t$ in (\ref{settei}).  


We consider the problem of selecting an appropriate value of tuning parameter $t$ (or $\lambda$) by using $C_p$ type criteria, for which the concept of degrees of freedom plays a key role \citep{Ye:1998}.   Assume that the expectation and the variance-covariance matrix of the response vector $ \bm{y}$ are
\begin{equation}
E[ \bm{y}]=  \bm{\mu},\quad V( \bm{y}) = E[( \bm{y} -  \bm{\mu})( \bm{y} -  \bm{\mu})^T] = \tau^2 I,  \label{true_distribution_of_y}
\end{equation}
where $ \bm{\mu}$ is a true mean vector and ${\tau}^2$ is a true variance.  Given a modeling procedure $m$, the estimate $\hat{ \bm{\mu}}= m( \bm{y})$ can be produced from the data vector $\bm{y}$.  
Then, the degrees of freedom of the fitting procedure $m$ is defined as (\citealp{Ye:1998,Efron:1986,Efron:2004})
\begin{eqnarray}
\DF = \sum_{i=1}^N \frac{\mathrm{cov}(\hat{{\mu}}_i,y_i)}{\tau^2}, \label{gdf}
\end{eqnarray}
where $\hat{{\mu}}_i$ is the $i$th element of $\hat{\bm{\mu}}$. For example, when the estimator $\hat{ \bm{\mu}}$ is expressed as a linear combination of response vector, i.e. $\hat{ \bm{\mu}} = H \bm{y}$ with $H$ being independent of $ \bm{y}$, the degrees of freedom is $\mathrm{tr}(H)$.  The trace of matrix $H$ is referred to as an effective number of parameters \citep{HastieandTibshirani:1990}, which is widely used to select the tuning parameter in ridge-type regression.  
In sparse regression modeling such as the lasso, however,  it is difficult to derive the degrees of freedom, since the penalty term is not differentiable at $\bm{\beta}_j=\bm{0}$ ($j=1,\dots,p$) so that the solutions are not usually expressed in a closed form.  
 Mallows' $C_p$ criterion, which is an unbiased estimator of the true prediction error, can be constructed with the degrees of freedom defined in (\ref{gdf}).   
 Assume that the response vector $ \bm{y}$ is generated according to (\ref{true_distribution_of_y}), 
and the true expectation $\bm{\mu} $ is estimated by linear regression model.  As a criterion to measure the effectiveness of the model, we consider the expected error (e.g., \citealp{Hastieetal:2008}) defined by
\begin{equation}
{\rm Err} =E_{ \bm{y}} E_{ \bm{y}^{\rm new}}[(\hat{ \bm{\mu}} -  \bm{y}^{\rm new})^T(\hat{ \bm{\mu}} -  \bm{y}^{\rm new})], \label{expected error}
\end{equation}
where the expectation  $E_{ \bm{y}^{\rm new}}$ is taken over $ \bm{y}^{\rm new} \sim ( \bm{\mu},\tau^2I)$ independent of $ \bm{y}$.   
\begin{lemma}\label{Lem_cp}
The expected error in (\ref{expected error}) can be expressed as 
\begin{eqnarray}
{\rm Err} &=&  E_{ \bm{y}}\left[\| \bm{y}-\hat{ \bm{\mu}}\|^2+ 2\tau^2 \DF \right]. \label{Mallows_der}
\end{eqnarray}
\end{lemma}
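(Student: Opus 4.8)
The plan is to expand both sides of~(\ref{Mallows_der}) around the true mean $\bm{\mu}$ and to match the resulting terms using the definition of the degrees of freedom in~(\ref{gdf}). The whole argument is the classical ``optimism'' decomposition underlying Mallows' $C_p$, so no deep idea is needed; the only step requiring a little care is the identification of a cross term with $\tau^2\DF$.

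First I would dispose of the inner expectation over $\bm{y}^{\rm new}$ in~(\ref{expected error}). Writing $(\hat{\bm{\mu}}-\bm{y}^{\rm new})^\T(\hat{\bm{\mu}}-\bm{y}^{\rm new})=\|\hat{\bm{\mu}}\|^2-2\hat{\bm{\mu}}^\T\bm{y}^{\rm new}+\|\bm{y}^{\rm new}\|^2$ and using that $\bm{y}^{\rm new}$ is independent of $\bm{y}$, hence of $\hat{\bm{\mu}}=m(\bm{y})$, with $E_{\bm{y}^{\rm new}}[\bm{y}^{\rm new}]=\bm{\mu}$ and $E_{\bm{y}^{\rm new}}[\|\bm{y}^{\rm new}\|^2]=\|\bm{\mu}\|^2+N\tau^2$, one gets $E_{\bm{y}^{\rm new}}[(\hat{\bm{\mu}}-\bm{y}^{\rm new})^\T(\hat{\bm{\mu}}-\bm{y}^{\rm new})]=\|\hat{\bm{\mu}}-\bm{\mu}\|^2+N\tau^2$. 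Taking $E_{\bm{y}}$ then yields ${\rm Err}=E_{\bm{y}}[\|\hat{\bm{\mu}}-\bm{\mu}\|^2]+N\tau^2$.

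Next I would relate the unobservable risk $E_{\bm{y}}[\|\hat{\bm{\mu}}-\bm{\mu}\|^2]$ to the observable residual sum of squares. Decomposing $\bm{y}-\hat{\bm{\mu}}=(\bm{y}-\bm{\mu})-(\hat{\bm{\mu}}-\bm{\mu})$ and expanding gives $\|\bm{y}-\hat{\bm{\mu}}\|^2=\|\bm{y}-\bm{\mu}\|^2-2(\bm{y}-\bm{\mu})^\T(\hat{\bm{\mu}}-\bm{\mu})+\|\hat{\bm{\mu}}-\bm{\mu}\|^2$. Under~(\ref{true_distribution_of_y}) the first term has expectation $N\tau^2$, and the cross term satisfies $E_{\bm{y}}[(\bm{y}-\bm{\mu})^\T(\hat{\bm{\mu}}-\bm{\mu})]=\sum_{i=1}^{N}E_{\bm{y}}[(y_i-\mu_i)\hat{\mu}_i]=\sum_{i=1}^{N}\cov(\hat{\mu}_i,y_i)=\tau^2\DF$, where the middle equality uses $E_{\bm{y}}[y_i-\mu_i]=0$ and the last one is the definition~(\ref{gdf}). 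Hence $E_{\bm{y}}[\|\hat{\bm{\mu}}-\bm{\mu}\|^2]=E_{\bm{y}}[\|\bm{y}-\hat{\bm{\mu}}\|^2]-N\tau^2+2\tau^2\DF$.

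Substituting this into the expression for ${\rm Err}$ from the first step, the two $N\tau^2$ contributions cancel and I obtain ${\rm Err}=E_{\bm{y}}\big[\|\bm{y}-\hat{\bm{\mu}}\|^2+2\tau^2\DF\big]$, which is~(\ref{Mallows_der}). The main (and essentially only) point to be careful about is the cross-term identity: one must note that the mean-zero property of $\bm{y}-\bm{\mu}$ lets one replace $E_{\bm{y}}[(y_i-\mu_i)\hat{\mu}_i]$ by the covariance $\cov(\hat{\mu}_i,y_i)$ that appears in the definition of the degrees of freedom, after which everything follows by linearity of expectation.
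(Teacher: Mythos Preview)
Your proof is correct and follows essentially the same route as the paper's: both take the inner expectation over $\bm{y}^{\rm new}$ to reduce to $\|\hat{\bm{\mu}}-\bm{\mu}\|^2+N\tau^2$, then expand $\|\bm{y}-\hat{\bm{\mu}}\|^2$ around $\bm{\mu}$ and identify the cross term with $\tau^2\DF$. The only cosmetic difference is that the paper further splits $(\hat{\bm{\mu}}-\bm{\mu})$ as $(\hat{\bm{\mu}}-E_{\bm{y}}[\hat{\bm{\mu}}])+(E_{\bm{y}}[\hat{\bm{\mu}}]-\bm{\mu})$ before taking $E_{\bm{y}}$, whereas you use $E[y_i-\mu_i]=0$ directly to pass to the covariance; the two are equivalent.
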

\begin{proof}
The proof is in Appendix.   
\end{proof}
Lemma~\ref{Lem_cp} 
suggests $C_p$ criterion (e.g., \citealp{Efron:2004})
\begin{eqnarray*}
C_p= \| \bm{y}-\hat{ \bm{\mu}}\|^2+2\tau^2 \DF, \label{Cp_criterion}
\end{eqnarray*}
which is an unbiased estimator of the expected error in (\ref{expected error}). The optimal model is selected by minimizing $C_p$.  As an estimator of the true variance of  $\tau^2$, the unbiased estimator of error variance of the most  complex model is usually used.  


The degrees of freedom can lead to several model selection criteria, which are summarized in Table \ref{table: summary_penalty}.   \citet{Zouetal:2007} introduced Akaike's information criterion (\AIC; \ \citealp{Akaike:1973}) and Bayesian information criterion  (\BIC; \ \citealp{Schwarz:1978}). 
\citet{Wangetal:2007,Wangetal:2009} showed that the Bayesian information criterion holds the consistency in model selection.  We also introduce  bias corrected Akaike's information criterion (\AIC$_C$; \citealp{Sugiura:1978,Hurvichetal:1998}) and 
generalized cross validation (\GCV; \citealp{Craven:1979}).  
These two criteria do not need the true variance $\tau^2$. 
\begin{table}[t]
\caption{Summary of model selection criteria based on the degrees of freedom.}
\begin{center}
\begin{tabular}{ll}
\hline
\multicolumn{1}{c}{Criterion}&\multicolumn{1}{c}{Formula}\\
\hline
$C_p$ &${\| \bm{y} - \hat{ \bm{\mu}} \|^2}+2\tau^2 \DF$\\ \vspace{1mm}
\AIC &$ N\log (2\pi\tau^2) + \dfrac{\| \bm{y} - \hat{ \bm{\mu}} \|^2}{\tau^2} +  2 \DF$\\
\AIC$_C$ &$N \log \left(2\pi \dfrac{ \| \bm{y} - \hat{ \bm{\mu}} \|^2}{N} \right) + N -  \dfrac{2N {\DF}}{ N-{\DF}-1} $\\
\BIC &$ N\log (2\pi\tau^2) + \dfrac{\| \bm{y} - \hat{ \bm{\mu}} \|^2}{\tau^2} +  \log N \DF$\\
\GCV &$\dfrac{1}{N} \dfrac{\|  \bm{y} - \hat{ \bm{\mu}} \|^2}{ (1- \DF / N)^2}$\\
\hline
\end{tabular}
\end{center}
\label{table: summary_penalty}
\end{table}

\section{Efficient algorithm for computing the degrees of freedom} \label{GPS_SEC}
In this section,  first,  the generalized path seeking algorithm is briefly described.  Then, a new algorithm that iteratively computes the degrees of freedom  is introduced.   Furthermore, we modify the algorithm to ease the computational burden for large sample sizes.  
\subsection{Generalized path seeking algorithm}
 \citet{Friedman2008} proposed  the generalized path seeking, which is a fast algorithm to solve the problem (\ref{settei}).  The generalized path seeking can produce the entire solutions that closely approximate those for a wide variety of convex and non-convex constraints.  Suppose that the penalty term $P( \bm{\beta})$ satisfies following condition:
\begin{equation}
\left\{ \frac{\partial P( \bm{\beta})}{\partial |\beta_j|}>0 \mid j=1,\dots,p \right\}. \label{GPFcondition}
\end{equation}
This condition defines a class of penalties where each member in the class is a monotone increasing function of absolute value of each of its arguments. For example, the lasso penalty  $P( \bm{\beta}) = \sum_{j=1}^p |\beta_j|$  is included in this class, because $\partial P( \bm{\beta}) / \partial |\beta_j|=1>0 $.  Similarly, elastic net \citep{ZouandHastie:2005}, group lasso \citep{YuanandLin:2006}, adaptive lasso \citep{Zou:2006},  composite absolute penalties family \citep{Zhaoetal:2009}, minimax concave penalty  \citep{Zhang:2010} and generalized elastic net \citep{Friedman2008} with many other convex and non-convex penalties are included in this class.  

Denote $\hat{ \bm{\beta}}(t)$ is the solution at  tuning parameter $t$.   The generalized path seeking algorithm starts at $t=0$ with $\hat{ \bm{\beta}}(0)= {0}$. The solution can be iteratively computed:  for given $\hat{ \bm{\beta}}(t)$, the solution $\hat{ \bm{\beta}}(t + \Delta t )$ can be produced, where  $\Delta t$ is a small positive value.   
Suppose the path $\hat{ \bm{\beta}}(t)$ is a continuous function of $t$ and all coefficient paths $\{ \hat{\beta}_j(t)    \mid j=1,\dots,p \}$ are monotone function of $t$, that is,
$
\{ | \hat{\beta}_j(t+ \Delta t) | \ge | \hat{\beta}_j(t) |     \mid j=1,\dots,p \}. \label{monotone_condition}
$
For each step, one element of coefficient vector $\hat{ \bm{\beta}}(t)$, say $\hat{\beta}_k(t)$, is  incriminated in a correct direction $\lambda_k(t)$ with all other coefficients remaining unchanged, i.e.
\begin{eqnarray}
\hat{\beta}_k(t+\Delta t) &=& \hat{\beta}_k(t) + \Delta t \cdot  \lambda_k(t),\label{addall}\\
\{ \hat{\beta}_j(t+\Delta t) &=& \hat{\beta}_j(t)  \}_{j \ne k}, \label{addone}
\end{eqnarray}  
where $k$ and  $\lambda_k(t)$ are defined as
\begin{eqnarray}
k&=&\underset{j \in \{1,\dots,p\}} {\operatorname{argmax}} \ |g_j(t)|/p_j(t), \nonumber \\
\lambda_{k}(t) &=& g_k(t)/p_k(t). \label{update_beta_original}
\end{eqnarray}
Here $g_j(t) $ and $p_j(t) $  are
\begin{eqnarray*}
g_j(t) &=& -\left. \left[ \frac{\partial R( \bm{\beta})}{\partial \beta_j}  \right] \right|_{ \bm{\beta}=\hat{ \bm{\beta}}(t)}, \\
p_j(t) &=& \left.\left[ \frac{\partial P( \bm{\beta})}{\partial |\beta_j|}  \right] \right|_{ \bm{\beta}=\hat{ \bm{\beta}}(t)}.
\end{eqnarray*}
The derivation of the generalized path seeking algorithm is in Appendix.
\begin{remark}
We assumed that $\hat{ \bm{\beta}}(t)$ is continuous and each element is monotone function of $t$.  Although these conditions can be satisfied in most cases,  sometimes $\hat{ \bm{\beta}}(t)$ is  discontinuous or non-monotone function.  \citet{Friedman2008} proposed an approach for non-monotone case, which is as follows: first, we define a set $S=\{ j \mid \lambda_j(t) \cdot \hat{\beta}_j(t) < 0 \}$.  When $S$ is not empty,  the index $k$ is selected by $k = \mathrm{argmax}_{j \in S} |\lambda_j(t)|$.  Otherwise, $k = \mathrm{argmax}_{j \in \{1,\dots,p\}}  |\lambda_j(t)|$.   The detailed description of  discontinuous case is also given in \citet{Friedman2008}.   
 \end{remark} 


When $p_j(t)=1$ (i.e. the lasso penalty), (\ref{addall}) yields
\begin{equation}
\hat{\beta}_{k}(t+\Delta t)=\hat{\beta}_{k}(t) + \Delta t  \cdot g_{k}(t). \label{update_beta}
\end{equation}
Note that the updated coefficient in (\ref{addall}) and (\ref{update_beta}) moves in  the same direction even if the lasso penalty is not applied, because the condition in (\ref{GPFcondition}) yields $sign(g_k(t)) = sign(\lambda_k(t))$.  This means that 
the update equations (\ref{addall}) and (\ref{update_beta}) produce the same solution path when $\Delta t \rightarrow 0$ unless $p_j(t)$ or $1/p_j(t)$ diverges. Therefore, we can use the update equation in (\ref{update_beta}) instead of  (\ref{addall}).   If the update equation in (\ref{update_beta}) is applied, an iterative algorithm that computes the degrees of freedom in (\ref{gdf}) can be derived.

From (\ref{addone}) and  (\ref{update_beta}), the  predicted value at $t+\Delta t$ is  
\begin{eqnarray}
\hat{ \bm{\mu}}(t+\Delta t) 
&=&\hat{ \bm{\mu}}(t)  +  \Delta t \cdot g_{k}(t)  \bm{x}_{k} . \label{update_coefficient}
\end{eqnarray}
Because the loss function $R( \bm{\beta})$ is squared loss as (\ref{squaredloss}),  we have $g_{k}(t)= 2 \bm{x}_{k}^T( \bm{y} - \hat{ \bm{\mu}}(t))/N$. Thus,
$\hat{ \bm{\mu}}(t+\Delta t)$ is
\begin{eqnarray}
\hat{ \bm{\mu}}(t+\Delta t) 
&=& \hat{ \bm{\mu}}(t)  + \frac{2}{N} \Delta t \  \bm{x}_{k}  \bm{x}_{k}^T  \cdot ( \bm{y} - \hat{ \bm{\mu}}(t) )  . \label{update_mu}
\end{eqnarray}
\begin{example}\label{example:orthogonalGPS}
Let $X$ be orthogonal, i.e. $X^TX=I$.  
By substituting (\ref{update_mu}) into $g_{j}(t)= 2 \bm{x}_{j}^T( \bm{y} - \hat{ \bm{\mu}}(t))/N$,  the update equation of $g_j(t)$ is
\begin{equation*}
g_j(t+\Delta t) = \left\{
\begin{array}{cc}
\left(1-{2\Delta t}/{N}\right)g_j(t) & \quad (j=k),\\
g_j(t)& \quad (j \ne  k).\\
\end{array}
\right.
\end{equation*}
Because of the orthogonality, we have 
\begin{equation*}
g_j(t) = (1-2\Delta  t/N)^{t_j} \cdot 2 \bm{x}_j^T \bm{y} /N,
\end{equation*}
where $t_j$ is the number of times that $j$th coefficient is updated until time step $t$.  It is shown that the absolute value of $g_j(t)$ is monotone non-increasing function and $g_j(t) \rightarrow 0$ when $t_j \rightarrow \infty$.  When $t \rightarrow \infty$,  the least squared estimates can be obtained because $g_j(t) \rightarrow 0$ for all $j=1,\dots,p$.
\end{example}

\subsection{Derivation of update equation of degrees of freedom} \label{derivation of DF}
Equation (\ref{update_mu}) suggests the update equation of the covariance matrix in  (\ref{gdf}) as follows:
\begin{eqnarray}
 \frac{\cov(\hat{ \bm{\mu}}(t+\Delta t), \bm{y})}{\tau^2}&=& \frac{\cov(\hat{ \bm{\mu}}(t), \bm{y})}{\tau^2} + \frac{2}{N} \Delta t \  \bm{x}_{k}  \bm{x}_{k}^T  \left\{ I  -  \frac{\cov(\hat{ \bm{\mu}}(t), \bm{y})}{\tau^2} \right\}.  \label{gpsdf_lasso}
\end{eqnarray}
The degrees of freedom is iteratively calculated by taking the trace of  (\ref{gpsdf_lasso}).   The initial value of ${\cov(\hat{ \bm{\mu}}(t), \bm{y})} / {\tau^2}$ is set to zero-matrix $ {O}$ because of the following equation:
\begin{equation*}
 \frac{\mathrm{cov}(\hat{ \bm{\mu}}(0) , \bm{y}) }{\tau^2} =  \frac{\mathrm{cov}( {0}, \bm{y}) }{\tau^2} = {O}.
  \end{equation*}

Let 
$M(t) = {\cov(\hat{ \bm{\mu}}(t), \bm{y})} / {\tau^2}$  
and  $k(t)$ be the index of updated element of coefficient vector at time step $t$.  The update equation of the degrees of freedom in (\ref{gpsdf_lasso}) can be expressed as
\begin{equation}
I-M(t+\Delta t)=(I-\alpha   \bm{x}_{k(t)}  \bm{x}_{k(t)}^T)(I-M(t))  ,  \label{gpsdf_lasso_adj0}
  \end{equation}
  where  $\alpha = 2\Delta t /N$.   Then, the covariance matrix can be updated by
  \begin{equation}
M(t)=I-(I-\alpha   \bm{x}_{k(t-1)}  \bm{x}_{k(t-1)}^T ) (I-\alpha   \bm{x}_{k(t-2)}  \bm{x}_{k(t-2)}^T ) \cdots (I-\alpha   \bm{x}_{k(1)}  \bm{x}_{k(1)}^T ).    \label{gpsdf_lasso_adj}
  \end{equation}
  
\begin{example} \label{example1}
The degrees of freedom can be easily derived when $X$ is orthogonal.  Because of the orthogonality,  the covariance matrix in (\ref{gpsdf_lasso_adj}) can be calculated as
  \begin{eqnarray*}
M(t)&=&I-(I-\alpha   \bm{x}_1  \bm{x}_1^T )^{t_1}(I-\alpha   \bm{x}_2  \bm{x}_2^T )^{t_2} \cdots (I-\alpha   \bm{x}_p  \bm{x}_p^T )^{t_p}\\
&=& \sum_{j=1}^p \{1-(1-\alpha)^{t_j}   \} \bm{x}_j \bm{x}_j^T,   \label{gpsdf_lasso_adj_ex}
  \end{eqnarray*}
where $t_j$ is defined in Example \ref{example:orthogonalGPS}.  Then, the degrees of freedom is
  \begin{equation*}
{\rm tr} \{ M(t) \}=\sum_{j=1}^p \{1-(1-\alpha)^{t_j}   \}.\label{gpsdf_lasso_adj_ex2}
  \end{equation*}
 When $t$ is very small, the degrees of freedom is close to $0$ since $\alpha = 2\Delta t/N$ is sufficiently small.  As $t$ gets larger, the degrees of freedom increases since $(1-\alpha)^{t_j} > (1-\alpha)^{t_j+1}$.   
   When $t_j \rightarrow \infty$ for all $j$, the degrees of freedom becomes the number of parameters, which coincides with the degrees of freedom of least squared estimates.
   \end{example}  


\subsection{Modification of the update equation} \label{Modification of the algorithm}
The update equation in (\ref{update_coefficient}) causes little change in predicted values from $t$ to $t+ \Delta t$ near the least squared estimates, because $|g_k(t) | =|2 \bm{x}_k^T( \bm{y}-\hat{ \bm{\mu}}(t))/N|$ is very close to zero.  In order to overcome this difficulty, we update the $k(t)$th element of coefficient vector $m$ times. Here $m$ is an integer which becomes large near least squared estimates.    Since  $g_k(t+\Delta t) = \left(1-{2\Delta t}/{N}\right)g_k(t) $ as shown in the Example \ref{example:orthogonalGPS},  the $\hat{\beta}_{k}(t+m\Delta t)$ is
\begin{eqnarray}
\hat{\beta}_{k}(t+m\Delta t)=\hat{\beta}_{k}(t) +  \frac{1-(1-\alpha)^{m}}{\alpha} \Delta t \cdot g_k(t). \label{GPS_update_modification2}
\end{eqnarray}
The update equation in (\ref{GPS_update_modification2}) can be applied even when $m$ is a positive real value.  

The following update equation can be used so that the coefficient is appropriately updated near the least squared estimates:
\begin{eqnarray}
\hat{\beta}_{k}(t+m\Delta t)&=&\hat{\beta}_{k}(t) +  \Delta t \cdot sign( g_k(t)) \nonumber \\
&=&\hat{\beta}_{k}(t) +  \frac{1}{|g_k(t)|}\Delta t \cdot g_k(t).
 \label{GPS_update_modification}
\end{eqnarray}
Equations (\ref{GPS_update_modification2}) and (\ref{GPS_update_modification}) give us
\begin{eqnarray}
m = \frac{\log(1-\alpha/|g_k(t)|)}{\log (1-\alpha)}. \label{GPS_update_modification3}
\end{eqnarray}
It should be assumed that $\alpha < |g_k(t)|$ for any step so that $\log(1-\alpha/|g_k(t)|)$ exists.  

When $m$ is given by (\ref{GPS_update_modification3}), the update equation of the degrees of freedom in  (\ref{gpsdf_lasso_adj0}) can be replaced with 
\begin{eqnarray}
M(t+m\Delta t)&=&I-\left\{ I-   \alpha_t  \bm{x}_{k(t)}  \bm{x}_{k(t)}^T \right\} (I-M(t)) ,  \label{naive update}
\end{eqnarray}
where $\alpha_t = {\alpha } / { |g_k(t)|}$ .  
The algorithm that computes the solutions and the degrees of freedom is given in Algorithm \ref{GPSalgorithm for GDF modified}.

\begin{algorithm}[t]
\caption{An iterative algorithm that computes the solution and the degrees of freedom.}
\label{GPSalgorithm for GDF modified}
\begin{algorithmic}[1]
\STATE $t=0$.
\WHILE{$ \{ |g_j(t)| > \alpha \}$  $(j=1,\dots,p)$}
\STATE Compute $\{ g_j(t) \}$ and  $\{ \lambda_j(t) \}$ $(j=1,\dots,p)$.
\STATE $S=\{ j \mid \lambda_j(t) \cdot \hat{\beta}_j(t) < 0  \}$ \label{GPS:S}
\IF{$S={\rm empty}$}
\STATE $k = \underset{j \in \{ 1,\dots,p \}}{\operatorname{argmax}}  |\lambda_j(t)|$
\ELSE
\STATE $k = \underset{j \in S}{\operatorname{argmax}}  |\lambda_j(t)|$
\ENDIF
\STATE Compute $m = {\log(1-\alpha/|g_k(t)|)} / {\log (1-\alpha)}$.
\STATE $\hat{\beta}_{k}(t+m\Delta t)=\hat{\beta}_{k}(t) + \Delta t \cdot sign(\lambda_{k}(t)) $ \label{GPSupdate}
\STATE  $\{ \hat{\beta}_{j}(t+m\Delta t)=\hat{\beta}_{j}(t)  \}_{j \ne k}$
\STATE Compute
\begin{equation*}
M(t+m\Delta t)=I-\left\{ I-  \alpha_t \bm{x}_{k(t)}  \bm{x}_{k(t)}^T \right\} (I-M(t)) . 
\end{equation*}  \label{df_cov_update}
\STATE  Compute $\DF(t+m\Delta t)=\mathrm{tr}\left\{M(t+m\Delta t)\right\} $
\STATE  $t \leftarrow t + m\Delta t$
\ENDWHILE
\end{algorithmic}
\end{algorithm}
\subsection{More efficient algorithm} \label{subsec: efficient}
The update equation (\ref{naive update}) suggests each step costs $O(N^2)$ operations to update the covariance matrix $M(t)$.  Because the number of iterations denoted by $T$ is usually very large such as $T=100000$, the proposed algorithm seems to be inefficient when $N$ is large.  However, a simple modification of the algorithm eases the computational burden. With the modified process, each step costs only $O(q^2)$ operations, where $q$ is the number of selected variables through the generalized path seeking algorithm:  $p-q$ variables are not selected at all steps for generalized path seeking algorithm.  
When $p$ is very large, $q$ is smaller than $p$ and early stopping is used.  For example, suppose that $p=5000$; if we do not want more than 200 variables in the final model, we set $q=200$ and stop the algorithm when 200 variables are selected.  

The modified algorithm is as follows:  first, the generalized path seeking algorithm is implemented to obtain the entire solutions.  The degrees of freedom is not computed, whereas the value of $g_k(t) $ $(t=1,\dots,T)$  should be stored in the memory.  Then, the QR decomposition of $N \times q$ matrix $X^*=( \bm{x}_{j_1} \cdots  \bm{x}_{j_q})$ is implemented, where $ \bm{x}_{j_1},\dots, \bm{x}_{j_q}$ are variables selected by the generalized path seeking algorithm.  Note that $\# \{j_1,\dots,j_q  \} = q$.  
The  matrix $X^*$ can be written as $X^* = QR$, where $Q$ is an $N \times q$ orthogonal matrix and $R$ is a  $q \times q$ upper triangular matrix.  Note that $ \bm{x}_{k}$ can be written as $Q \bm{r}_{k}$, where $ \bm{r}_{k}$ is the $q$-vector which consists of $k$th column of $R$. 
The update equation of the degrees of freedom based on  
(\ref{naive update}) is 
\begin{eqnarray}
\mathrm{tr} \{ M(t+m\Delta t) \}  &=& \mathrm{tr} \{Q^T M(t+m\Delta t) Q \}\cr
&=& q- \mathrm{tr} \{ (I-\alpha_{t}   \bm{r}_{k(t)}  \bm{r}_{k(t)}^T ) (I-\alpha_{t-1}  \bm{r}_{k(t-1)}  \bm{r}_{k(t-1)}^T  ) \cdots (I-\alpha_{1}   \bm{r}_{k(1)}  \bm{r}_{k(1)}^T  ) \} . \cr && \label{modified update equation}
\end{eqnarray}
Therefore, the computational cost of (\ref{modified update equation}) is only $O(q^2)$.  

We provide a package {\tt msgps} (Model Selection criteria via extension of Generalized Path Seeking), which computes Mallows' $C_p$ criterion, Akaike's information criterion, Bayesian information criterion  and generalized cross validation via the degrees of freedom given in Table \ref{table: summary_penalty}.  The package is implemented in the {\tt R} programming system \citep{R:2010}, and available from Comprehensive R Archive Network (CRAN) at \url{http://cran.r-project.org/web/packages/msgps/index.html}. 

\section{Numerical Examples}
\subsection{Monte Carlo simulations}
Monte Carlo simulations were conducted to investigate the effectiveness of our algorithm. The predictor vectors were generated from Gaussian distribution with mean vector zero.  The outcome values $y$ were generated by 
\begin{equation*}
y =  \bm{\beta}^T \bm{x} + \bm{\varepsilon}, \quad \bm{\varepsilon} \sim N({0}, \sigma^2).
\end{equation*}
The following four Examples are presented here.  
\begin{enumerate}
\item In Example 1, 200 data sets were generated with $N=20$  observations and eight predictors.  The true parameter was $ \bm{\beta} = (3,1.5,0,0,2,0,0)^T$ and $\sigma=3$.  The pairwise correlation between $ \bm{x}_i$ and $ \bm{x}_j$ was ${\rm cor}(i,j) = 0.5^{|i-j|}$.  
\item  Example 2 was the dense case.  The model was same as Example 1, but with $\beta_j=0.85$ $(j=1,\dots,8)$, and $\sigma=3$.  
\item The third example was same as Example 1, but with $ \bm{\beta} = (5,0,0,0,0,0,0,0)^T$ and $\sigma=2$.  In this model, the true $ \bm{\beta}$ is sparse.
\item In Example 4, a relatively large problem was considered.  200 data sets were generated with $N=100$ observations and 40 predictors.   We set
\begin{equation*}
 \bm{\beta} = (\underbrace{0,\dots,0}_{10},\underbrace{2,\dots,2}_{10},\underbrace{0,\dots,0}_{10},\underbrace{2,\dots,2}_{10})^T  
\end{equation*}
and $\sigma=15$. The pairwise correlation between $ \bm{x}_i$ and $ \bm{x}_j$ was ${\rm cor}(i,j) = 0.5$ ($i \ne j$). 
\end{enumerate} 
In this simulation study, there are 3 purposes as follows:
\begin{itemize}
\item Degrees of freedom: we investigated whether the proposed procedure can select adjusted tuning parameters compared with the degrees of freedom of the lasso given by \citet{Zouetal:2007}.
\item Model selection criteria for several penalties:  the performance of model selection criteria given in Table \ref{table: summary_penalty} was compared for the lasso, elastic net and generalized elastic net family \citep{Friedman2008}.
\item Speed:  the computational time based on (\ref{naive update}) was compared with that based on (\ref{modified update equation}).
\end{itemize}
A detailed description of each is presented.
\subsection*{Degrees of freedom} \label{sec: dfexample}

We compared the degrees of freedom computed by our procedure ($\DF_{gps}$, where $gps$ means generalized path seeking) with the degrees of freedom of the lasso proposed by \citet{Zouetal:2007} ($\DF_{zou}$).  The degrees of freedom of the lasso is   the number of non-zero coefficients.  Our method and  Zou's {et al}.  (2007) procedure do not yield identical result, since the $\DF_{zou}$ is an unbiased estimate of the degrees of freedom while $\DF_{gps}$ is the exact value of the degrees of freedom.  In this simulation study, the true value of $\tau^2$ was used to compute the model selection criteria.  


Table \ref{result_suchijikken_GPS1} shows the result of mean squared error (\MSE) and the standard deviation (\SD), which are   the mean and standard deviation of the following squared error (\SE):
\begin{eqnarray*}
{\SE}(s)  = \frac{1}{N} \| \hat{ \bm{\mu}}^{(s)} - X \bm{\beta}\|^2
\end{eqnarray*}
where $\hat{ \bm{\mu}}^{(s)} $ is the estimate of predicted values for $s$th dataset. The proportion of cases where zero (non-zero) coefficients correctly set to zero (non-zero), say, \ZZ\ (\NN), was also computed.  
We can see that 
\begin{itemize}
\item Our procedure slightly outperformed Zou's {et al.} (2007) one  in terms of minimizing the mean squared error for all examples.
\item Zou's {et al.} (2007) procedure selected zero coefficients correctly than the proposed procedure, while  our method correctly detected non-zero coefficients compared with  the  Zou's {et al.} (2007) approach.  This means our procedure tends to incorporate many more variables than the Zou's {et al.} (2007) one.
\end{itemize}


\begin{table}[t]
\caption{Mean squared error (\MSE), the standard deviation (\SD) and  the percentage of cases where zero (non-zero) coefficients correctly set to zero (non-zero), say, \ZZ\ (\NN), for our proposed procedure ($\DF_{gps}$) and \citet{Zouetal:2007} ($\DF_{zou}$). } \label{result_suchijikken_GPS1}
\begin{center}
\begin{tabular}{rrrrrrrrrr}
  \hline
&&\multicolumn{2}{c}{Ex. 1}&\multicolumn{2}{c}{Ex. 2}&\multicolumn{2}{c}{Ex. 3}&\multicolumn{2}{c}{Ex. 4}\\
&&$\DF_{gps}$&$\DF_{zou}$&$\DF_{gps}$&$\DF_{zou}$&$\DF_{gps}$&$\DF_{zou}$&$\DF_{gps}$&$\DF_{zou}$\\  \hline
& \MSE &  2.498 & 2.732 & 2.761 & 3.202 & 0.759 & 0.790& 41.35 & 42.37\\
&  \SD & 1.468 & 1.726 &1.353 & 1.727&  0.577 & 0.647& 10.67 & 12.36\\
&   \ZZ & 0.592 & 0.667 &--- & --- & 0.607 & 0.767  & 0.586 & 0.636 \\ 
&   \NN & 0.925 & 0.892 & 0.706 & 0.649 &1.000 & 1.000& 0.689 & 0.653 \\
 \hline
\end{tabular}
\end{center}
\end{table}

\subsection*{Model selection criteria for several penalties} \label{subsec:NEMSC}
We compared the performance of model selection criteria based on the degrees of freedom: $C_p$ criterion ($C_p$), bias corrected Akaike's information criterion (\AIC$_C$), Bayesian information criterion (\BIC) and generalized cross validation (\GCV).  
Because $C_p$ criterion and  Akaike's information criterion (\AIC) yield the same results when true error variance $\tau^2$ is given, the result of Akaike's information criterion is not presented in this paper. For $C_p$ criterion and Bayesian information criterion, we need to estimate the true error variance $\tau^2$.  The value of $\tau^2$ was estimated by the ordinary least squares  of most complex model.  The cross validation, which is one of the most popular methods to select the tuning parameter in sparse regression via regularization, was also applied.  Since the leave-one-out cross validation is computationally expensive, the 10-fold cross validation was used.

In this simulation study, we also compared the feature and performance of several penalties including the lasso, elastic net and generalized elastic net.   The elastic net and generalized elastic net are given  as follows:
\begin{enumerate}
\item Elastic net: 
\begin{equation*}
P( \bm{\beta}) = \sum_{j=1}^p \left\{ \frac{1}{2} \alpha  \beta_j^2 + (1-\alpha) |\beta_j| \right\}, \quad 0 \le \alpha \le 1.
\end{equation*}
Here $\alpha$ is a tuning parameter. Note that $\alpha=0$ yields the lasso, and $\alpha=1$ produces the ridge penalty.  

\item Generalized elastic net: 
\begin{equation*}
P( \bm{\beta}) = \sum_{j=1}^p \log \{ \alpha + (1-\alpha)|\beta_j|  \} , \quad 0 < \alpha < 1,
\end{equation*}
where $\alpha$ is the tuning parameter.  \citet{Friedman2008} showed the generalized elastic net approximates the power family penalties   $P( \bm{\beta}) = \sum |\beta_j|^{\gamma} \  (0 <  \gamma < 1)$, whereas the difference occurs at very small absolute coefficients.  The detailed description of the generalized elastic net is in \citet{Friedman2008}.  A similar idea of generalized elastic net is in \citet{Candes:2008}.

\end{enumerate}
Note that the degrees of freedom of the lasso (Zou's {et al}., 2007) cannot be directly applied to the generalized elastic net family.   

We computed  mean squared error (\MSE), standard deviation (\SD), the proportion of cases where zero (non-zero) coefficients correctly set to zero (non-zero), say, \ZZ\ (\NN).  
Tables \ref{result_suchijikken_GPS2}, \ref{result_suchijikken_GPS2_2} and \ref{result_suchijikken_GPS2_3} show the comparison of model selection criteria for the lasso, elastic net ($\alpha=0.5$) and generalized elastic net ($\alpha=0.5$).  
The detailed discussion of each example is as follows:
\begin{itemize}
\item  The generalized elastic net yielded the sparsest solution, while the elastic net produced the densest one. In Example 2, i.e. the dense case, the elastic net performed very well. 
On the other hand, in sparse case (Example 3), the generalized elastic net most often selected zero coefficients correctly.  
\item In most cases,  bias corrected Akaike's information criterion (\AIC$_C$) resulted in good performance in  terms of mean squared error.  The Bayesian information criterion (\BIC)  performed very  well in some cases (e.g., Examples 3 and 4 on the lasso).  In dense cases (Example 2), however,  the performance of Bayesian information criterion (\BIC) was poor.   
\item The performance of cross validation (\CV) for generalized elastic net family was excellent on Example 3.  On Examples 1, 2 and 4, however, the mean squared error was large compared with other model selection criteria based on the degrees of freedom. The cross validation estimates expected error by separating the  training data from the test data.   
Unfortunately,  the regularization method with non-convex penalty such as generalized elastic net does not produce unique solution: small change in the training data can result in different solution.   Thus, the cross validation may be unstable for non-convex penalty in many cases. 

\end{itemize}

\begin{table}[t]
\caption{Comparison of model selection criteria for the lasso. } \label{result_suchijikken_GPS2}
\begin{center}
\begin{tabular}{rrrrrrrrrrr}
  \hline
 && $C_p$ & \AIC$_C$ & \GCV & \BIC   & \CV \\ 
  \hline
Ex. 1& \MSE& 2.604 & 2.497 & 2.614 & 2.567 & 2.905  \\ 
 &  \SD &1.562 & 1.463 & 1.583 & 1.500 &  1.722 \\ 
 &  \ZZ &  0.519 & 0.562 & 0.498 & 0.622 & 0.605 \\ 
 &  \NN & 0.925 & 0.923 & 0.935 & 0.918 & 0.903 \\ 
   \cline{2-7}
Ex. 2 &  \MSE &2.807 & 2.772 & 2.781 & 2.891 &3.195  \\ 
  & \SD &  1.435 & 1.399 & 1.420 & 1.462 &  1.712 \\ 
 &  \ZZ & --- & --- & --- & --- & --- \\ 
  & \NN & 0.733 & 0.729 & 0.750 & 0.678 & 0.686  \\ 
 \cline{2-7}
Ex. 3 & \MSE &  0.855 & 0.790 & 0.879 & 0.744 & 0.986 \\ 
 &  \SD &  0.666 & 0.601 & 0.673 & 0.594 & 0.692 \\ 
&   \ZZ & 0.543 & 0.573 & 0.524 & 0.639 & 0.716 \\ 
&   \NN & 1.000 & 1.000 & 1.000 & 1.000 & 1.000  \\
 \cline{2-7}
Ex. 4 & \MSE &  41.66 & 42.91 & 43.82 & 39.22 & 43.59 \\ 
&   \SD & 10.72 & 11.19 & 11.74 & 9.512 & 11.83 \\ 
&   \ZZ & 0.577 & 0.545 & 0.530 & 0.671 & 0.643 \\ 
&   \NN & 0.692 & 0.704 & 0.713 & 0.653 & 0.655 \\ 
     \hline
\end{tabular}
\end{center}
\end{table}
\begin{table}[t]
\caption{Comparison of model selection criteria for the  elastic net ($\alpha=0.5$). } \label{result_suchijikken_GPS2_2}
\begin{center}
\begin{tabular}{rrrrrrrrrrr}
  \hline
 && $C_p$ & \AIC$_C$ & \GCV & \BIC   & \CV \\ 
  \hline
Ex. 1& \MSE & 2.860 & 2.814 & 2.826 & 2.933 & 3.017 \\ 
&   \SD &1.520 & 1.487 & 1.533 & 1.552 & 1.556 \\ 
&   \ZZ & 0.066 & 0.088 & 0.061 & 0.101 & 0.074 \\ 
&   \NN & 0.995 & 0.993 & 0.995 & 0.993 & 0.997 \\ 
     \cline{2-7}
Ex. 2 &  \MSE & 2.199 & 2.061 & 2.169 & 2.218 & 2.301 \\ 
&   \SD &1.434 & 1.303 & 1.352 & 1.489 & 1.432 \\ 
 &  \ZZ & --- & --- & --- & --- & --- \\ 
&   \NN &  0.973 & 0.967 & 0.976 & 0.958 & 0.964 \\ 
  \cline{2-7}
Ex. 3 & \MSE &1.566 & 1.675 & 1.577 & 1.714 & 1.720 \\ 
&   \SD &  0.754 & 0.834 & 0.755 & 0.878 & 0.960 \\ 
&   \ZZ &  0.014 & 0.031 & 0.016 & 0.036 & 0.028 \\ 
&   \NN & 1.000 & 1.000 & 1.000 & 1.000 & 1.000 \\ 
   \cline{2-7}
Ex. 4 & \MSE & 25.15 & 23.31 & 25.55 & 24.73 & 24.73 \\
 &  \SD & 10.51 & 8.656 & 11.27 & 9.031 & 9.431 \\ 
 &  \ZZ & 0.000 & 0.000 & 0.000 & 0.000 & 0.000 \\ 
 &  \NN & 1.000 & 1.000 & 1.000 & 1.000 & 1.000 \\ 
    \hline
\end{tabular}
\end{center}
\end{table}
\begin{table}[t]
\caption{Comparison of model selection criteria for the generalized  elastic net ($\alpha=0.5$). } \label{result_suchijikken_GPS2_3}
\begin{center}
\begin{tabular}{rrrrrrrrrrr}
  \hline
 && $C_p$ & \AIC$_C$ & \GCV & \BIC   & \CV \\ 
  \hline
Ex. 1& \MSE &  3.060 & 2.996 & 3.051 & 3.112 & 3.785 \\ 
&   \SD & 1.825 & 1.810 & 1.835 & 1.821 & 2.015 \\ 
&   \ZZ &  0.709 & 0.776 & 0.695 & 0.811 & 0.792 \\ 
&   \NN & 0.798 & 0.778 & 0.813 & 0.752 & 0.707 \\ 
   \cline{2-7}
Ex. 2 & \MSE&3.757 & 3.747 & 3.658 & 4.018 & 4.411 \\ 
  & \SD & 1.638 & 1.587 & 1.619 & 1.671 & 2.108 \\ 
 &  \ZZ & --- & --- & --- & --- & --- \\ 
  & \NN & 0.502 & 0.462 & 0.517 & 0.416 & 0.457 \\ 
  \cline{2-7}
Ex. 3 & \MSE & 0.889 & 0.803 & 0.949 & 0.683 & 0.485 \\ 
 &  \SD &  0.773 & 0.722 & 0.780 & 0.690 & 0.641 \\ 
  &  \ZZ &0.714 & 0.749 & 0.679 & 0.805 & 0.859 \\ 
 &  \NN & 1.000 & 1.000 & 1.000 & 1.000 & 1.000 \\ 
   \cline{2-7}
Ex. 4 & \MSE &74.68 & 74.12 & 75.15 & 78.75 & 80.56 \\ 
&   \SD &  18.22 & 17.97 & 18.90 & 16.19 & 22.41 \\ 
&   \ZZ & 0.796 & 0.796 & 0.766 & 0.912 & 0.815 \\ 
&   \NN & 0.379 & 0.382 & 0.408 & 0.262 & 0.352 \\ 
\hline
\end{tabular}
\end{center}
\end{table}

\clearpage
\subsection*{Speed}
The computational time based on update equation (\ref{naive update}) (na\"ive update) was compared with that based on (\ref{modified update equation}) (modified update).   In order to compare the timings for various number of samples, we changed the number of samples $N$ for all Examples: $N=100$, $200$ and $500$.  Table \ref{table:speed} shows the result of timings averaged over 200 runs for lasso penalty. All timings were carried out on an Intel Core 2 Duo 2.0 GH processor on Mac OS X.  Note that the ``timing" means the computational time of producing the solutions and computing the model selection criteria.   The speed based on na\"ive update was very slow when $N=500$, because we need $O(500^2)$ operations to compute the degree of freedom for each step. However, the modified algorithm was fast even when the number of samples was large. 

\begin{table}[t]
\caption{Computational time (seconds) based on update equation (\ref{naive update}) (na\"ive)  and (\ref{modified update equation})  (modified)  averaged over 200 runs for the lasso.}
\label{table:speed}
\begin{center}
\begin{tabular}{ccccccccccc}
  \hline
&\multicolumn{2}{c}{Ex. 1} &\multicolumn{2}{c}{Ex. 2} &\multicolumn{2}{c}{Ex. 3} &\multicolumn{2}{c}{Ex. 4}  \\
&na\"ive &modified&na\"ive &modified&na\"ive &modified&na\"ive &modified\\ 
  \hline
$N=100$ &1.399 &0.145 &1.434 & 0.164&1.398&0.143&1.521&0.343\\ 
$N=200$ & 4.827&  0.190  &4.834&  0.246& 4.820&0.188&5.012&0.375\\ 
$N=500$  & 69.92 &  0.313& 69.96& 0.351&69.89&0.310 &70.38&0.459\\ 
   \hline
\end{tabular}
\end{center}
\end{table}

\clearpage
\subsection{Application to diabetes data}
The proposed algorithm was applied to diabetes data \citep{Efronetal:2004}, which has $N= 442$ and $p = 10$.  Ten baseline predictors include age, sex, body mass index (bmi), average blood pressure (bp), and six blood serum  measurements (tc, ldl, hdl, tch, ltg, glu).  The response is a quantitative measure of disease progression one year after baseline.  We considered the following three penalties:  the lasso, elastic net ($\alpha=0.5$), and generalized elastic net ($\alpha=0.5$).  The entire solution path along with the solution selected by $C_p$ criterion, and the degrees of freedom  are presented in Figure \ref{fig:path_df_diabetes}.  
On the lasso penalty, the degrees of freedom of the lasso \citep{Zouetal:2007} is also depicted.   The degrees of freedom of our procedure was smaller than that of the lasso \citep{Zouetal:2007} except for $\|\hat{ \bm{\beta}}(t)\|  \in [2838, 2851]$, where the degrees of freedom of the lasso decreased because the non-zero coefficient of $7$th variable became zero at $t=2838$. 

On the elastic net penalty with $\alpha=0.5$, 
the degrees of freedom  increased rapidly at some points.  For example, when $\|\hat{ \bm{\beta}}(t)\|=343.76$,  
the degrees of freedom increased by about 0.864.  
Figure \ref{fig:2nd_df_diabetes} shows the solution path of 2nd variable and $g_2(t)$, which is helpful for understanding why the degrees of freedom rapidly increased.   When $\|\hat{ \bm{\beta}}(t)\|$ attained $343.76$,  the sign of $g_2(t)$ changed.  At this point, $\hat{\beta}_2(t)>0$. Thus,  $g_2(t)\hat{\beta}_2(t)<0$ at $\|\hat{ \bm{\beta}}(t)\|=343.76$, which means $\hat{\beta}_2(t)$ was updated because the set $S$ in  line 4 in Algorithm \ref{GPSalgorithm for GDF modified} became $S$=\{2\}.  When $|g(t)|$ was sufficiently small, $m$ in  (\ref{GPS_update_modification3}) became very large, which made a substantial change  in the degrees of freedom.    


The estimated standardized coefficients for the diabetes data based on the lasso, elastic net ($\alpha=0.5$) and generalized elastic net ($\alpha=0.5$) are reported in Table \ref{table:coef}.  The tuning parameter was selected by $C_p$ criterion, where the degrees of freedom was computed via the proposed procedure.  The generalized elastic net yielded the sparsest solution.   On the other hand, the elastic net did not produce sparse solution.

\begin{figure}[t]
      \centering
	\includegraphics[width=130mm]{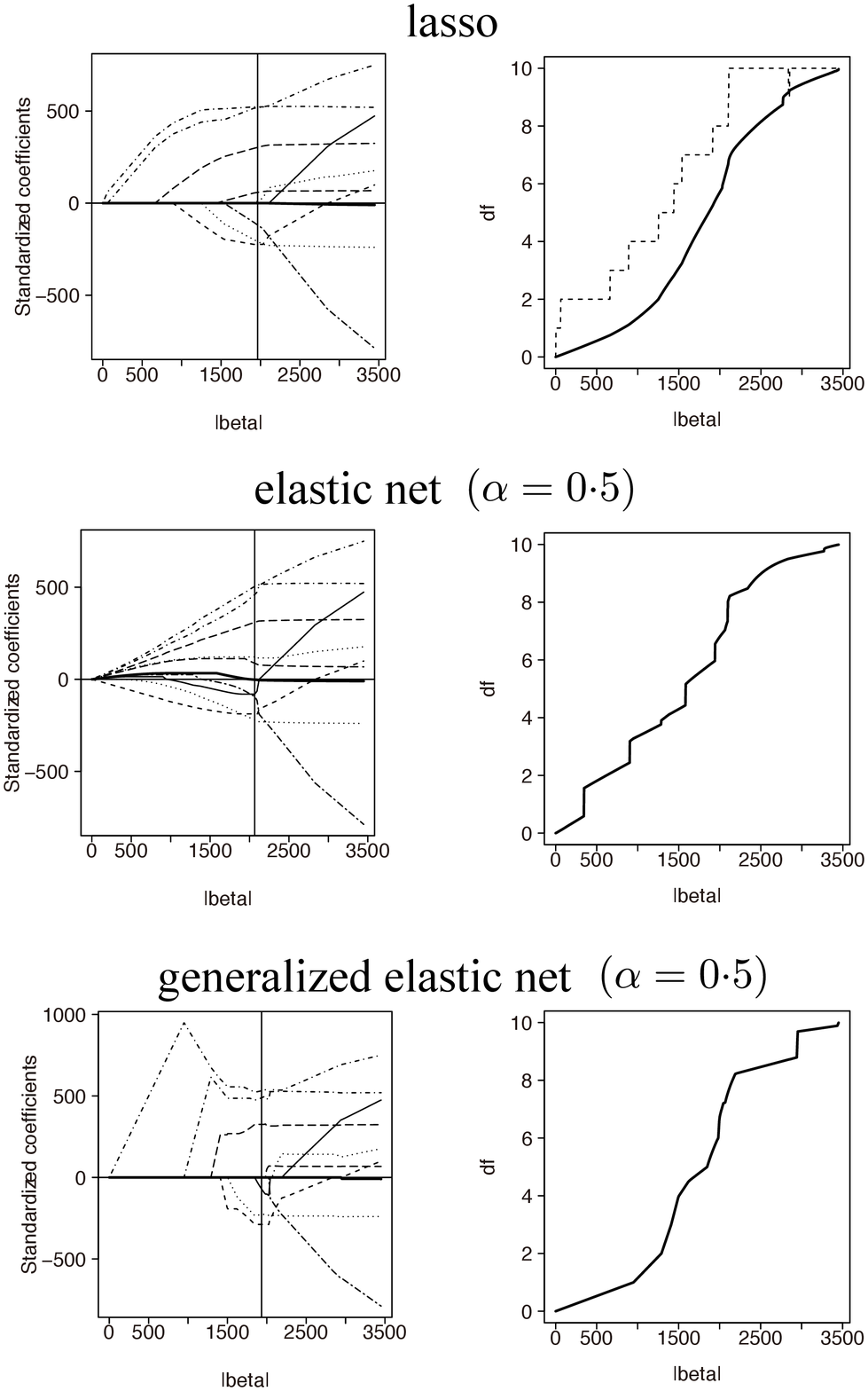}
 \caption{The solution path (left panel) and the degrees of freedom (right panel). The vertical line in the solution path indicates the selected model by  $C_p$ criterion.  The solid line of upper right panel is the degrees of freedom of the lasso \citep{Zouetal:2007}}
      \label{fig:path_df_diabetes} 
   \end{figure}

\begin{table}[t]
\caption{The estimated standardized coefficients for the diabetes data based on the lasso, elastic net ($\alpha=0.5$) and generalized elastic net ($\alpha=0.5$). }
\label{table:coef}
\begin{center}
\begin{tabular}{rrrrrrrrrrrr}
  \hline
 & (Intercept) & age & sex & bmi & map & tc & ldl & hdl & tch & ltg & glu \\ 
  \hline
lasso& 152 & 0 & $-$209 & 522 & 303 & $-$120 & 0 & $-$224 & 12 & 518 & 58 \\ 
enet & 152 & $-$2 & $-$220 & 504 & 309 & $-$93 & $-$81 & $-$188 & 122 & 460 & 87 \\ 
genet & 152 & 0 & $-$228 & 532 & 326 & 0 & $-$70 & $-$288 & 0 & 489 & 0 \\ 
   \hline
\end{tabular}
\end{center}
\end{table}

\begin{figure}[p]
      \centering
	\includegraphics[width=74mm]{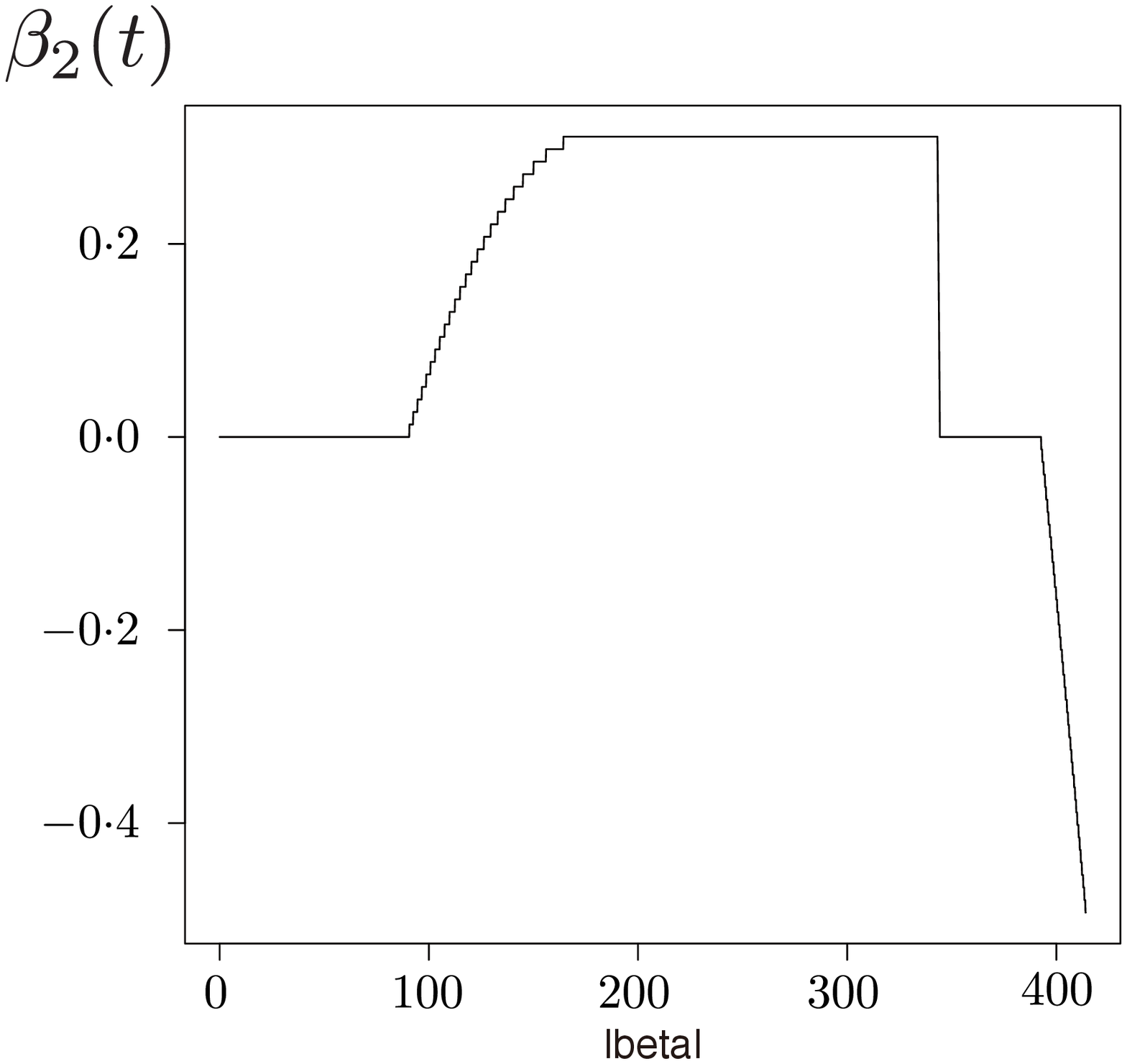}
	\includegraphics[width=70mm]{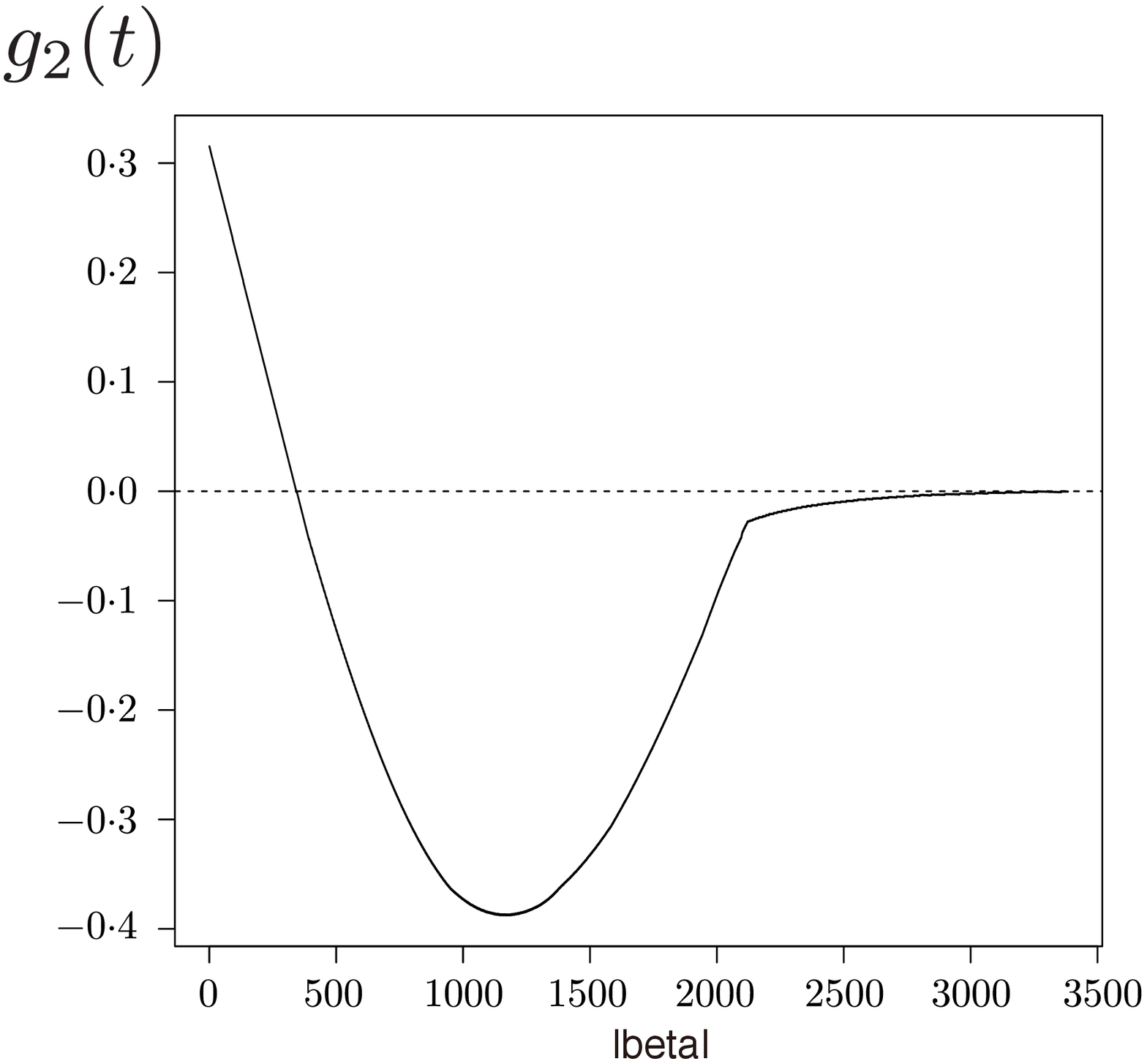}
 \caption{The solution path $\beta_2(t)$ (left panel) 
 and $g_2(t)$ (right panel) of the 2nd variable for elastic net with $\alpha=0.5$.  } \label{fig:2nd_df_diabetes} 
   \end{figure}
\clearpage
\section{Concluding Remarks}
We have proposed a new procedure for selecting tuning parameters in sparse regression modeling
via regularization, for which the degrees of freedom was calculated  by a computationally-efficient algorithm.   
Our procedure can be applied to construct model selection criteria for evaluating models estimated by the regularization methods with a wide variety of convex and non-convex penalties.      
Monte Carlo simulations were conducted to investigate  the effectiveness of the proposed procedure.   
Although the cross validation has been widely used to select tuning parameters, the model selection criteria based on the degrees of freedom often yielded better results, especially for non-convex penalties such as generalized elastic net.  


In the present paper, we considered a computationally-efficient algorithm to select the tuning parameter in the sparse regression model.   For more general models including generalized linear models, multivariate analysis such as factor analysis and graphical models, it is also important to select appropriate values of tuning parameters.  As a future research topic, it is interesting to introduce a new selection algorithm that handles large models by unifying the mathematical approach and computational algorithms.
\renewcommand{\theequation}{A\arabic{equation}}
\setcounter{equation}{0}
\section*{Appendix}
\subsection*{Proof of  Lemma \ref{Lem_cp}}
First, we divide $\|  \bm{y}^{\rm new} - \hat{ \bm{\mu}} \|^2$ into three terms as follows:
\begin{eqnarray}
\|  \bm{y}^{\rm new} - \hat{ \bm{\mu}} \|^2 &=& \|  \bm{y}^{\rm new} - { \bm{\mu}} \|^2 + \| \bm{\mu} - \hat{ \bm{\mu}} \|^2  +2(  \bm{y}^{\rm new} -  \bm{\mu}  )^T( \bm{\mu} - \hat{ \bm{\mu}}). \label{CP_der1}
\end{eqnarray}
The term $\|  \bm{\mu} - \hat{ \bm{\mu}} \|^2$ is expressed as
\begin{eqnarray}
\| \bm{\mu} - \hat{ \bm{\mu}} \|^2 &=&  \| \bm{y} - \hat{ \bm{\mu}} \|^2 +\| \bm{y} - { \bm{\mu}} \|^2 - 2( \bm{y} - \bm{\mu})^T( \bm{y} - \hat{ \bm{\mu}})\cr
&=&  \| \bm{y} - \hat{ \bm{\mu}} \|^2 - \| \bm{y} - { \bm{\mu}} \|^2 + 2( \bm{y} - \bm{\mu})^T(\hat{ \bm{\mu}} - \bm{\mu})\cr
&=&  \| \bm{y} - \hat{ \bm{\mu}} \|^2 - \| \bm{y} - { \bm{\mu}} \|^2 + 2( \bm{y} - \bm{\mu})^T(\hat{ \bm{\mu}} -E_{ \bm{y}}[\hat{ \bm{\mu}}]) \cr
&&+ 2( \bm{y} - \bm{\mu})^T(E_{ \bm{y}}[\hat{ \bm{\mu}}]- \bm{\mu})
.\label{CP_der2}
\end{eqnarray}
Substituting (\ref{CP_der2}) into (\ref{CP_der1}) gives us
\begin{eqnarray*}
\|  \bm{y}^{\rm new} - \hat{ \bm{\mu}} \|^2 &=& \|  \bm{y}^{\rm new} - { \bm{\mu}} \|^2 + \| \bm{y} - \hat{ \bm{\mu}} \|^2 - \| \bm{y} - { \bm{\mu}} \|^2 + 2( \bm{y} - \bm{\mu})^T(\hat{ \bm{\mu}} -E_{ \bm{y}}[\hat{ \bm{\mu}}]) \cr
&& + 2( \bm{y} - \bm{\mu})^T(E_{ \bm{y}}[\hat{ \bm{\mu}}]- \bm{\mu})+2(  \bm{y}^{\rm new} -  \bm{\mu}  )^T( \bm{\mu} - \hat{ \bm{\mu}}) .
\end{eqnarray*}
By taking expectation of $E_{ \bm{y}^{\rm new}}$,  we obtain
\begin{eqnarray*}
E_{ \bm{y}^{\rm new}} [\|  \bm{y}^{\rm new} - \hat{ \bm{\mu}} \|^2] &=& E_{ \bm{y}^{\rm new}}[\|  \bm{y}^{\rm new} - { \bm{\mu}} \|^2] + \| \bm{y} - \hat{ \bm{\mu}} \|^2 - \| \bm{y} - { \bm{\mu}} \|^2 \cr
&&+2( \bm{y} - \bm{\mu})^T(\hat{ \bm{\mu}} -E_{ \bm{y}}[\hat{ \bm{\mu}}]) + 2( \bm{y} - \bm{\mu})^T(E_{ \bm{y}}[\hat{ \bm{\mu}}]- \bm{\mu}).
\end{eqnarray*}
Equation (\ref{Mallows_der}) can be derived by taking the expectation of $E_{ \bm{y}}$ and using $E_{ \bm{y}^{\rm new}}[\|  \bm{y}^{\rm new} - { \bm{\mu}} \|^2] = E_{ \bm{y}}[\|  \bm{y} - { \bm{\mu}} \|^2] =n\tau^2$.
\subsection*{Derivation of generalized path seeking algorithm}
We derive the generalized path seeking algorithm.  First, the following lemma is provided.
\begin{lemma}\label{Lem_GPS_der_1}
Let us consider the following problem.
\begin{equation}
\Delta \hat{ \bm{\beta}}(t) =  \underset{\Delta { \bm{\beta}}}{\operatorname{argmin}}[R(\hat{ \bm{\beta}}(t) +  \Delta  \bm{\beta}) - R(\hat{ \bm{\beta}}(t)) ] \quad {\rm s.t.} \quad P(\hat{ \bm{\beta}}(t) +  \Delta  \bm{\beta}) - P(\hat{ \bm{\beta}}(t))  \le \Delta t. \label{problem_deltat}
\end{equation}
The solution is $\Delta \hat{ \bm{\beta}}(t) = \hat{ \bm{\beta}}(t+\Delta t) - \hat{ \bm{\beta}}(t)$.
\end{lemma}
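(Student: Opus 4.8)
The plan is to reduce the incremental problem (\ref{problem_deltat}) to the defining optimization (\ref{settei}) evaluated at the enlarged budget $t+\Delta t$, by an affine change of variables. First I would substitute $\bm{\beta} = \hat{\bm{\beta}}(t) + \Delta\bm{\beta}$. Since $R(\hat{\bm{\beta}}(t))$ and $P(\hat{\bm{\beta}}(t))$ are constants with respect to the optimization variable, minimizing $R(\hat{\bm{\beta}}(t)+\Delta\bm{\beta}) - R(\hat{\bm{\beta}}(t))$ over $\Delta\bm{\beta}$ is equivalent to minimizing $R(\bm{\beta})$ over $\bm{\beta}$, and the constraint $P(\hat{\bm{\beta}}(t)+\Delta\bm{\beta}) - P(\hat{\bm{\beta}}(t)) \le \Delta t$ becomes $P(\bm{\beta}) \le P(\hat{\bm{\beta}}(t)) + \Delta t$.

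The second step is to show that the budget constraint defining $\hat{\bm{\beta}}(t)$ is active, that is $P(\hat{\bm{\beta}}(t)) = t$. This holds whenever $t$ is smaller than the penalty evaluated at an unconstrained minimizer of $R$ (the least squares fit on the column space of $X$), which is precisely the regime of interest for the solution path; if the constraint were slack, $\hat{\bm{\beta}}(t)$ would already minimize $R$ without constraint, contradicting that $t$ lies in this regime. Granting activeness, the constraint in the transformed problem reads $P(\bm{\beta}) \le t+\Delta t$, so the transformed problem is exactly (\ref{settei}) with tuning parameter $t+\Delta t$, whose solution is $\hat{\bm{\beta}}(t+\Delta t)$ by definition. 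Undoing the substitution yields $\Delta\hat{\bm{\beta}}(t) = \hat{\bm{\beta}}(t+\Delta t) - \hat{\bm{\beta}}(t)$.

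I expect the main obstacle to be exactly this activeness claim $P(\hat{\bm{\beta}}(t)) = t$: it is where the implicit restriction of $t$ to values below the least squares penalty level, together with the monotonicity assumption on the path, must be invoked. A secondary and essentially cosmetic issue is uniqueness of the argmin, so that ``the solution is'' is literally meaningful — for convex $P$ with $R$ strictly convex on the relevant subspace this is automatic, while for the non-convex penalties covered here the statement should be read as identifying the path value produced by the generalized path seeking recursion rather than a globally unique minimizer.
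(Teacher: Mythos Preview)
Your proposal is correct and follows the same substitution strategy as the paper: set $\bm{\beta}^*=\hat{\bm{\beta}}(t)+\Delta\bm{\beta}$, drop the additive constants, and identify the resulting problem with (\ref{settei}) at budget $t+\Delta t$. The paper's own argument is in fact looser than yours: it simply writes the constraint as $P(\hat{\bm{\beta}}(t)+\Delta\bm{\beta})\le P(\hat{\bm{\beta}}(t))+\Delta t\le t+\Delta t$ using feasibility $P(\hat{\bm{\beta}}(t))\le t$, and then declares the two problems identical. Strictly speaking this only shows one containment of feasible sets, so your insistence on activeness $P(\hat{\bm{\beta}}(t))=t$ (in the pre--least-squares regime) is what actually makes the feasible sets coincide and closes the argument; your remark on uniqueness is likewise a point the paper leaves implicit.
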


\begin{proof}
The constraint in (\ref{problem_deltat}) is written as
\begin{equation*}
P(\hat{ \bm{\beta}}(t) +  \Delta  \bm{\beta}) \le  P(\hat{ \bm{\beta}}(t))  + \Delta t \le t+\Delta t .
\end{equation*}
Assume that $ \bm{\beta}^* = \hat{ \bm{\beta}}(t) + \Delta  \bm{\beta}$.  Then, the problem (\ref{problem_deltat}) is   
\begin{equation*}
\hat{ \bm{\beta}}^* = \underset{{{ \bm{\beta}}^* }}{\operatorname{argmin}} \ [R( { \bm{\beta}}^*) ] \quad {\rm s.t.} \quad P( { \bm{\beta}}^*) \le t + \Delta t.
\end{equation*}
The solution is $\hat{ \bm{\beta}}^* = \hat{ \bm{\beta}}(t+\Delta t )$, which leads to $\Delta \hat{ \bm{\beta}}(t) = \hat{ \bm{\beta}}(t+\Delta t ) - \hat{ \bm{\beta}}(t)$.
\end{proof}

When $\Delta t $ is sufficiently small, the problem (\ref{problem_deltat}) can be approximately written as  
\begin{eqnarray}
&&\Delta \hat{ \bm{\beta}}(t) = \underset{{\{\Delta \beta_j \mid j=1,\dots,p\} }}{\operatorname{argmax}}  
\sum_{j=1}^p g_j(t) \cdot \Delta \beta_j \quad  \\
&&{\rm s.t.} \quad \sum_{\hat{\beta}_j(t)=0}p_j(t) \cdot | \Delta \beta_j | +  \sum_{\hat{\beta}_j(t) \ne 0}p_j(t) \cdot sign(\hat{\beta}_j(t)) \cdot  \Delta \beta_j   \le \Delta t. \label{condition_adj_GPS}
\end{eqnarray}
Since all coefficient paths $\{ \hat{\beta}_j(t)  \mid {j=1,\dots,p} \}$ are monotone functions of $t$, we have $\{ sign(\hat{\beta}_j(t))= sign(\Delta \hat{\beta}_j(t)) \mid {j=1,\dots,p} \}$. Therefore, the problem in (\ref{condition_adj_GPS}) can be expressed as
\begin{equation}
\Delta \hat{ \bm{\beta}}(t) = \underset{{\{\Delta \beta_j  \mid j=1,\dots,p \} }}{\operatorname{argmax}} 
\sum_{j=1}^p g_j(t) \cdot \Delta \beta_j \quad  {\rm s.t.} \quad \sum_{j=1}^p p_j(t) \cdot | \Delta \beta_j |   \le \Delta t. \label{Problem_adj}
 \end{equation} 
 The problem in (\ref{Problem_adj}) can be viewed as a linear programming.  Then, the updates in (\ref{addall})  and (\ref{addone}) can be derived.  
\bibliographystyle{asa} 
\bibliography{paper-ref} 

\end{document}